\newcommand{\B}{\mathcal{B}}
\newtheorem{lemmx}{Lemma}
\newtheorem{thmx}{Theorem}
\begin{document}

\title{\Large On Parallelizing Matrix Multiplication by the Column-Row Method\thanks{Work supported by the Danish National Research Council under the Sapere Aude program.}}
\author{Andrea Campagna\thanks{acam@itu.dk, IT University of Copenhagen, Denmark.} \\
\and 
Konstantin Kutzkov\thanks{konk@itu.dk, IT University of Copenhagen, Denmark.}\\
\and Rasmus Pagh\thanks{pagh@itu.dk, IT University of Copenhagen, Denmark.}}
\date{}

\maketitle


\begin{abstract} \small\baselineskip=9pt We consider the problem of sparse matrix multiplication by the column row method in a distributed setting where the matrix product is not necessarily sparse. We present a surprisingly simple
method for ``consistent'' parallel processing of sparse
outer products (column-row vector products) over several processors, in a {\em communication-avoiding\/} setting where each processor has a copy of the input.
The method is consistent in
the sense that a given output entry is always assigned to the
same processor independently of the specific structure of the outer product. 
We show guarantees on the work done by each processor,
and achieve linear speedup down to the point where the cost is
dominated by reading the input.
Our method gives a way of distributing (or parallelizing) matrix product computations in settings where the main bottlenecks are storing the result matrix, and inter-processor communication. Motivated by observations on real data that often the absolute values of the entries in the product adhere to a power law, we combine our approach with frequent items mining algorithms and show how to obtain a tight approximation of the weight of the heaviest entries in the product matrix. 
\\
As a case study we present the application of our approach to frequent pair mining in transactional data streams, a problem that can be phrased in terms of sparse $\{0,1\}$-integer matrix multiplication by the column-row method. Experimental evaluation of the proposed method on real-life data supports the theoretical findings.
\end{abstract}

\section{Introduction}

Column row vector products (aka.~outer products) are ubiquitous in scientific computing. Often one needs to compute an aggregate function over several products. 
Most notably, the product of two matrices can be written as a sum of outer products.
Observe that a product of two sparse matrices can be dense in the worst case, and will typically be much less sparse than the input matrices. This is a practical problem in algorithms that multiply very high-dimensional, sparse matrices, such as the Markov Cluster Algorithm~\cite{thesis-clust} popular in bio-informatics. There are approaches to approximating matrix products that can use less space~\cite{CohenL97,PaghITCS} but they do not address how to balance the work in a parallel or distributed setting. State-of-the-art clusters for bio-informatics can easily have a combined main memory capacity of 1 TB or more. Thus there is potential for computing huge result matrices if computation and storage can be used efficiently.

Based on the observation that handling the output data, rather than the input data, can be the main bottleneck we present a method for parallelizing the computation of sparse outer products in a setting where {\em the input is assumed to be broadcast}. No other communication is allowed --- that is, algorithms need to be {\em communication avoiding}.
To our best knowledge there is no previous work on (sparse) matrix multiplication in such a setting.
The standard approach is to assume that this input is distributed among processors, which then communicate as needed.
As noted by Demmel et al.~\cite{4536305} communication is often a bottleneck in parallel sparse matrix computations.
We believe that in some settings (such as an ethernet-connected cluster) where broadcast is cheap, but the total capacity for point-to-point communication is a bottleneck, an approach that trades broadcasting the input for reduced inter-processor communication can improve performance, assuming that the total work remains the same and that the computation is load-balanced well.

Because there can be no communication after the broadcast of the input we do not need to consider a particular parallel or distributed memory model.
However, for concreteness we will from now on speak of parallelization across multiple processors.
For practical reasons we also restrict our experiments to the setting of a multi-core architecture.
Our algorithm avoids communication by distributing the matrix output entries in a {\em consistent\/} way, i.e., each processor will process either all nonzero terms contributing to a given output entry, or none of them. 
As a case study we consider 0-1 matrix products arising in data mining applications, where the task is to approximate the largest entries in the result matrix well.
This is done by combining our method for parallelizing outer products with known heavy hitter algorithms.

Our algorithm works in a data stream setting, where the vectors are given one at a time, and space for saving past vectors is not available.
In contrast, traditional work on parallel matrix multiplication requires that the whole input can be stored in the memory of the system.
However, our main point is not the space saved on not storing the input, but rather that the (possibly non-dense) output is distributed evenly among the processors.
We believe that there are many interesting opportunities ahead in parallel processing of data streams, especially for computations that require more than (quasi-) linear time in the input size.

\section{Background} \label{background}

\paragraph{Parallel matrix multiplication.}

A major algorithmic problem that can be approached using our method is sparse matrix multiplication.
The product of matrices $A, B \in \mathbb{R}^{n\times n}$ can be computed by summing over $n$ outer products of columns of $A$ and rows of $B$. Assuming all outer products are dense a simple algorithm  distributes the computation among several processors such that the work load is equal and there is no need for communication between the processors or shared data structures: each one of $p$ processors will sum over a submatrix of the outer products of size $\frac{n}{\sqrt{p}} \times \frac{n}{\sqrt{p}}$, and only the column and row vectors have to be read by each processor. This approach, first described by Cannon~\cite{cannon}, is known to achieve very good scalability with a growing number of processors.  

However, for the situation where the outer products are expected to be sparse, this simple algorithm is not guaranteed to achieve good scalability. The reason is that we do not know in advance the specific structure of the output matrix and it might happen that certain $\frac{n}{\sqrt{p}} \times \frac{n}{\sqrt{p}}$-submatrices are dense while many others are sparse, which implies that the workload is not balanced among processors. An approach to avoid this problem is to initially permute the rows (resp.~columns) of the left (resp.~right) input matrix. This corresponds to permuting both rows and columns of the output matrix. However, this approach is vulnerable to the situation in which the nonzero output entries are not well distributed among rows or columns. For example, if half of the work in computing the output relates to a single output row or column, this work will only be distributed among $\sqrt{p}$ out of $p$ processors.

Also, it is often the case that we don't know in advance the exact dimensions of the sparse matrices as they are generated in a streaming fashion. Matrix multiplication in the streaming setting has recently received some attention~\cite{journals/siamcomp/DrineasKM06,PaghITCS,conf/focs/Sarlos06}. Randomized algorithms have been designed approximating the values of individual entries in the matrix product running in subcubic time and subquadratic space in one or two passes over the input matrices and requiring access only to certain columns and rows. 

We refer to~\cite{thesis} for an overview on state-of-the-art results on parallel matrix multiplication relying on inter-processor communication.

\section{Overview of contributions.}

\paragraph{Parallel sparse matrix multiplication.}
Hashing has long been used for load balancing tasks, mapping a given task $x$ to a random bucket $h(x)$.
A good hash function will distribute the tasks almost evenly among the buckets, such that different processors will handle given buckets and get a similar load.
When computing a matrix product we can think of each entry of the output matrix as a task.
However, if we decide to simply hash all entries in a product (in parallel), a lot of inter-processor communication will be needed to identify the entries in each bucket that are nonzero in a given outer product.
Our approach turns this around, and uses a carefully chosen hash function that allows a processor to efficiently identify the output entries hashing to a given bucket.
In particular each processor can, {\em without communication or shared data structures}, process exactly the terms that belong to its bucket, with an additional overhead that is linear in the size of the input vectors. We show that our approach is particularly well suited for matrix multiplication by the column-row method when individual outer products are sparse but not the resulting matrix product.

\paragraph{Approximate matrix multiplication.}
As already mentioned, randomized algorithms running in subcubic time, using a small amount of memory and performing only a limited number of scans of the input matrix are gaining more and more popularity~\cite{journals/siamcomp/DrineasKM06,PaghITCS,conf/focs/Sarlos06}. Instead of computing exactly the matrix product, the algorithms return an approximation of individual entries. For matrix products where the entries adhere to a skewed distribution, the approximation of the heaviest entries is known to be of very high quality. The reader is referred to~\cite{PaghITCS} for a more detailed discussion and a list of applications of approximate matrix multiplication.

Inspired by the above and observations on real data, we compose our parallel matrix multiplication method with two known algorithms for finding frequent items in data streams: {\sc Space-Saving}~\cite{spacesaving} and {\sc Count-Sketch}~\cite{cs}. The former gives deterministic upper and lower bounds on the true value and the latter -- an unbiased estimator. Both algorithms are capable of handling weighted updates but  {\sc Space-Saving} is restricted only to positive updates.

\medskip

\noindent
More concretely, our main contributions can be summarized as follows:
\begin{itemize}
\item A new algorithm for the parallelization of the multiplication of sparse matrices by the column row method.

\item The combination of the above with approximate matrix multiplication. 

\item We theoretically analyze the expected time complexity, the load balancing among cores and the approximation guarantee of our algorithm under the assumption of Zipfian distribution of the entries' weights (a common situation in many real-life applications). In particular, while {\sc Count-Sketch} was always based on an initial hashing/splitting step, we believe that this work is the first to investigate the composition of hashing and {\sc Space-Saving} for approximate matrix multiplication.

\item Extensive experimental evaluation of our approach in the context of frequent pair mining in transactional data streams.
\end{itemize}

\section{Preliminaries} \label{sec:prel}

\paragraph{Notation}

For vectors $u, v \in \mathbb{R}^n$ the {\em outer product} of $u$ and $v$ is denoted as ${uv} \in \mathbb{R}^{n \times n}$. For matrices $A, B \in \mathbb{R}^{n \times n}$ we denote by $a_i$ the $i$th column of $A$ and by $b_j$ the $j$th row of $B$ for $i, j \in [n]$ where  $[n]:= \{0,\ldots,n-1\}$. The $i$th element in a vector $u$ is written as $u(i)$. The {\em weight} of the {\em entry} $(i, j)$ in the product $AB$ is the inner product of the $i$th row of $A$ and $j$th column of $B$, for $i, j \in [n]$. Alternatively, in the column-row notation it can be also written as $\sum_{k=0}^{n-1}a_k(i)b_k(j)$, i.e., the sum over the $(i, j)$th entry in all outer products. When clear from the context we refer to entries with larger absolute weight as {\em heavy} entries. The number of non-zero entries in $u \in \mathbb{R}^n$ is written as $|u|$. 
\\

A family $\mathcal{H}$ of functions from $\mathcal{E}$ to $[k]$ is {\em $k$-wise independent} if for a function $h: \mathcal{E} \rightarrow [k]$ chosen uniformly at random from $\mathcal{H}$ it holds $$\Pr[h(e_1) = c_1 \wedge h(e_2) = c_2 \wedge \dots \wedge h(e_t) = c_t] = 
k^{-t}$$
%
for distinct elements $e_i \in \mathcal{E}$, $1 \leq i \leq t$, and $c_i \in [k]$. 
We will refer to a function chosen uniformly at random from a $k$-wise independent family $\mathcal{H}$ as a $k$-wise independent function.

The elements in $\mathcal{E}$ adhere to Zipfian distribution with parameters $C$ and $z$ if $w_i = C/i^z$ for
the absolute weight $w_i$ of the $i$th most heavy element in $\mathcal{E}$.



\paragraph{Frequent items mining algorithms.} \label{cs_ss_overview}

We will use two well-known frequent item mining algorithms {\sc Count-Sketch}~\cite{cs} and {\sc Space-Saving}~\cite{spacesaving} as subroutines. We give a brief overview of how they work.
 
In {\sc Count-Sketch} every item $i$ is hashed by a hash function
$h:[n] \rightarrow [k]$ to a position in an array $\mathcal{CS}$ consisting of $k$ estimators, each of them being a real number. Upon updating the weight of an item $i$, we add  $1$ or $-1$ to the corresponding estimator by using a uniform sign hash function $s(i)$ evaluating $i$ to either $1$ or $-1$.
After processing the stream the frequency of a given item $i$ can be estimated
as $\mathcal{CS}[h(i)]\cdot s(i)$.
The intuition is that for a heavy item the contribution from other items will cancel out and will not significantly affect the estimate. Both $h$ and $s$ are pairwise independent and this is sufficient to show that
for an appropriate number of estimators and a skewed distribution of item weights the heavy items will be assigned to only one estimator with high probability. The probability for correct estimates can be amplified by working with $t>1$ hash functions and returning the median of the $t$ estimates upon a query for the frequency of a given item.

The {\sc Space-Saving} algorithm offers upper and lower frequency bounds,
rather than an unbiased estimator.
It keeps a summary of the stream consisting of $\ell$ triples $(\text{item}_j, \text{count}_j, \text{overestimation}_j)$, $1 \leq j \leq \ell$. The $\ell$ triples are sorted according to their {\em count} value.  Upon the arrival of a new item $i$ the algorithm distinguishes between the following cases:
\begin{itemize}
\item If not all $\ell$ slots are already full, we insert a new triple as $(i, 1, 0)$. 
\item If $i$ is already recorded, we increase the corresponding counter by 1 and update the order in the summary.
\item Otherwise, we replace the last triple $(\text{item}_\ell, \text{count}_\ell, \text{overestimation}_\ell)$ with a new triple $(i, \text{count}_\ell + 1, \text{count}_\ell)$. 
\end{itemize}
After processing the stream we return as an estimate for an item weight either its counter in the summary or, if not recorded, the overestimation in the last counter in the list. For a stream of length $m$, the overestimation in a given counter is bounded by $m/\ell$ and it is guaranteed that an item occurring more than $m/\ell$ times will be in the summary. This comes from the fact that each increase in the overestimation of an item not recorded in the summary is witnessed by $\ell$ different items in the summary and this cannot happen more than $m/\ell$ times. However, the algorithm is known to perform extremely well in practice and report almost the exact weights for the heaviest entries, see e.g.~\cite{CormodeH09}.

A natural generalization of both algorithms works with weighted positive updates. This is straightforward for {\sc Count-Sketch}, and for {\sc Space-Saving} the only issue to consider is how to efficiently update the summary.  A solution  achieving amortized constant time per update is presented in~\cite{bisam}.

%
%
%
%
%
%

\section{Our approach}

\paragraph{The algorithm.}
Assume we are given matrices $A, B \in \mathbb{R}^{n \times n}$ such that $A$ is stored as column-major ordered triples and $B$ as row-major ordered triples. (For an overview of efficient implementations of the model we refer to Chapter 2 of~\cite{thesis}.)  The skeleton of our algorithm is the following:

\begin{itemize}
\item Define $s$ {\sc Count-Sketch}  estimators (or alternatively {\sc Space-Saving} summaries with $\ell$ triples, for a small constant $\ell$). 
\item For the $k$th column and row vectors $a_k \in A, b_k \in B$, $k \in [n]$, hash each entry in $ab$ to one of the estimators.
\item After all outer products have been processed return an estimate for all entries.
\end{itemize}

In the parallel version, each processor keeps track of a subset of the estimators, and the total space remains fixed.
Thus, we are in a {\em shared nothing\/} model with no need for a shared memory -- the only requirement is that each processor sees the column and row vectors for each outer product.
Pseudocode of the algorithm is given in Figure~\ref{main_alg}. We refer to it as the {\sc CRoP} algorithm, which refers to the fact that each processor crops the output matrix to produce a small fraction of it, and is also an acronym for ``Column-Row Parallel''.

A crucial property in our analysis and experimental evaluations is that the heaviest entries do not often collide, and thus we obtain high quality estimates on their weight.  
We combine two different ways for estimating the weight of the heaviest entries based on the {\sc Count-Sketch} and {\sc Space-Saving} algorithms. 
In particular, we use a distribution hash function $h: [n]\times [n] \rightarrow [\kappa]$ to split the
set of entries into $\kappa$ parts, and use a {\sc Space-Saving} sketch and a {\sc Count-Sketch} estimator on each part. 
The size $\kappa$ of the hash table and the size of the {\sc Space-Saving} sketch determines the accuracy of the estimates.

\paragraph{Parallel processing of entries.}
Na\"ively we could just iterate through all entries of each outer product, but we would like an algorithm that runs in time linear in the number of nonzero elements in the input vectors and the entries hashing to a given interval of the hash table. In other words, given sufficient parallelism we can handle a given data rate even if there is a huge number of entries in a given outer product.

%
%
%
%
%
%
%
\begin{lemmx} \label{rtime_lemma}
Let $h_a, h_b: \mathcal{E} \rightarrow [\kappa]$ be pairwise independent hash functions. Given input vectors $a$ and $b$ with $O(t)$ nonzero entries each and an interval $\mathcal{L} = [q, q+1, \ldots, r)$ for $0 \le q \le r \le \kappa$ we can construct $\mathcal{E}_\mathcal{L}^{ab}$, the set of entries occurring in the outer product ${ab}$ hashing to a value in $\mathcal{L}$, in expected time $O(|\mathcal{E}_\mathcal{L}^{ab}| + t)$. 
\end{lemmx}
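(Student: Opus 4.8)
The plan is to reduce the extraction task to a one‑dimensional sliding‑window problem over sorted lists of hash values. Write $I=\{i : a(i)\neq 0\}$ and $J=\{j : b(j)\neq 0\}$, so $|I|,|J|=O(t)$, and recall that an entry $(i,j)$ of the outer product is placed in bucket $h(i,j)=(h_a(i)+h_b(j))\bmod\kappa$. The key observation is that a nonzero entry $(i,j)$ of $ab$ belongs to $\mathcal{E}_\mathcal{L}^{ab}$ if and only if $h_b(j)\in \mathcal{L}\ominus h_a(i)$, where $\mathcal{L}\ominus c:=\{(x-c)\bmod\kappa : x\in\mathcal{L}\}$ is a cyclic interval of length $r-q$. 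Hence, for each fixed $i\in I$, the matching $j$'s are exactly those whose $h_b$‑value lands in a prescribed cyclic interval, and $\mathcal{E}_\mathcal{L}^{ab}$ is the disjoint union over $i\in I$ of $\{i\}$ paired with that set. (Nothing below changes if the buckets are combined by some other rule, as long as for each fixed $i$ the acceptable values of $h_b(j)$ form a contiguous cyclic range.)

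First I would group the nonzero positions by hash value, building via a dictionary the nonempty buckets $A_c=\{i\in I : h_a(i)=c\}$ and $B_c=\{j\in J : h_b(j)=c\}$ together with the index sets $S_a=\{c : A_c\neq\emptyset\}$ and $S_b=\{c : B_c\neq\emptyset\}$, each of size at most $t$; this costs expected $O(t)$ time. Then I would sort $S_a$ and $S_b$ increasingly (an integer sort on values in $[\kappa]$, which is $O(t)$ on the word RAM, or $O(t\log t)$ by comparison sorting) and sweep: enumerate $c_1\in S_a$ in increasing order while maintaining, by two pointers into the cyclic sorted list $S_b$, the window $W(c_1)=S_b\cap(\mathcal{L}\ominus c_1)$, and for every $c_1$ and every $c_2\in W(c_1)$ output all pairs in $A_{c_1}\times B_{c_2}$. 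An initial $O(\log t)$ search positions the pointers for the smallest $c_1$.

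Correctness is immediate from the observation above, since the pairs emitted for a given $c_1$ are precisely the entries of $\mathcal{E}_\mathcal{L}^{ab}$ whose first coordinate hashes to $c_1$ under $h_a$, and these classes partition $\mathcal{E}_\mathcal{L}^{ab}$. For the time bound, grouping and sorting cost $O(t)$; in the emission phase every output pair is a distinct element of $\mathcal{E}_\mathcal{L}^{ab}$, and we spend only $O(1)$ extra work per value $c_1$ and per bucket entered or left by the window, so it suffices to show the two pointers perform $O(t)$ moves in total. This is the crux of the argument: as $c_1$ increases from $0$ to $\kappa-1$, the interval $\mathcal{L}\ominus c_1=[q-c_1,r-c_1)\pmod\kappa$ rotates by one position per step in a single direction around $\mathbb{Z}_\kappa$, completing strictly less than one full turn, so each pointer advances through each of the $\le t$ elements of $S_b$ at most once. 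What remains is routine bookkeeping: splitting $\mathcal{L}\ominus c_1$ into at most two contiguous ranges when it wraps around $0$, handling the degenerate cases $q=r$ and $I=\emptyset$ or $J=\emptyset$, and — if a deterministic guarantee is wanted in place of the expected one — replacing the dictionary by radix sort under the standard word‑RAM assumption, which is what the word ``expected'' in the statement is covering.
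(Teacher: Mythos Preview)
Your proposal is correct and follows essentially the same approach as the paper: exploit the additive form $h(i,j)=(h_a(i)+h_b(j))\bmod\kappa$, sort the nonzero coordinates of $a$ and $b$ by hash value, and then run a two-pointer sweep using that the target interval for $h_b(j)$ shifts monotonically as $h_a(i)$ increases. The only cosmetic differences are that the paper sorts the indices themselves (rather than first bucketing by hash value and then sorting the distinct values) and justifies the expected linear sorting time via bucket sort under pairwise independence, whereas you invoke a dictionary plus word-RAM integer sort; both routes yield the same $O(|\mathcal{E}_\mathcal{L}^{ab}|+t)$ bound.
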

\begin{proof} 
We exploit the special structure of our hash functions: $h(i,j)=({h_a}(i)+{h_b}(j)) \text{ mod } \kappa$ for a 2-wise independent hash function $h_a, h_b:  \mathbb{N} \rightarrow [\kappa]$ and $i, j \in [n]$. It is easy to show that this construction implies that $h: \mathcal{E} \rightarrow [\kappa]$ is also pairwise independent.  
To find the entries with a nonzero value in  ${ab}$ that hash to $[q,r)$ we first sort the indices with a nonzero value in each $a$ and $b$ according to the hash value in arrays $H_a$ and $H_b$ for $a$ and $b$, respectively.
Entries with a hash value in the right range correspond to elements in $H_a$ and $H_b$ with sum in $[q,r) \cup [\kappa+q,\kappa+r)$. Since the hash values are pairwise independent we can sort by bucket sort in expected linear time.
One way to find the entries in the right range would be to iterate over elements of $H_a$, and for each do two binary searches in $H_b$ to find the values in the right ranges.
However, this can be further improved by processing $H_a$ in sorted order, and exploiting that the entries with hash value in the right range correspond to intervals of $H_b$ that will be moving monotonically left. Thus, for a fixed index $i$ we find the entries $(i, j)$, $j \in [n]$ with a hash value in $[q,r) \cup [\kappa+q,\kappa+r)$ in time linear in the number of such entries.
This brings down the time to $O(|\mathcal{E}_\mathcal{L}^{{ab}}| + t)$.
\end{proof}

\paragraph{Parameters.}
We will assume that data is not lightly skewed and $z>1/2$. We will distinguish between the cases when $1/2 < z < 1$ and $z>1$. In order to keep the presentation concise the particular case $z=1$ will not be analyzed. In our analysis we will also use as a parameter the total number of nonzero entries $d$ occurring in the matrix product.
The value of $d$ is not known in advance. One can either be conservative and assume $d = O(n^2)$, or in the case of positive input matrices use efficient methods for estimating the number of nonzero entries~\cite{ACP} if two passes over the input are allowed.

%
%
%
%
%
%

\begin{figure}[t]
{ \footnotesize

\Kw{\sc CRoP}
\algsetup{indent=3em}
\begin{algorithmic}[1]
\REQUIRE matrices $A, B \in \mathbb{R^+}^{n\times n}$, Interval $[q, r)$,  a list $\kappa$ Space-Saving summaries $\mathcal{SS}$ each for $\ell$ entries, array for $\kappa$ real numbers $\mathcal{CS}$, pairwise independent hash functions $h_a, h_b: \mathbb{N} \rightarrow [\kappa]$, $s: \mathcal{E} \rightarrow \{-1,1\}$
\medskip
\FOR{$i \in [n]$}
\STATE $a:= i$th column of $A$, $b:= i$th row of $B$
\STATE $E := \mbox{\sc Hash}(h_a, h_b, a, b, [q, r))$ 
\FOR{$e \in E$}
	\STATE Update Space-Saving summary $\mathcal{SS}[h(e)]$ with the entry $e$. 
    \STATE $\mathcal{CS}[h(e)]=\mathcal{CS}[h(e)] + e\cdot s(e)$.
\ENDFOR
\ENDFOR

\end{algorithmic}
\bigskip
\Kw{\sc Hash}
\begin{algorithmic}[1]
\REQUIRE pairwise independent hash functions $h_a, h_b: \mathbb{N} \rightarrow [\kappa]$, vectors $a \in \mathbb{R}^n$, $b \in \mathbb{R}^n$, Interval $[q, r)$
\medskip
\STATE sort the nonzero entries in $a$ and $b$ according to $h_a$ and $h_b$ in arrays $H_a$ and $H_b$, respectively
\STATE   $k = H_b.length$, $E = \emptyset$
\FOR{$i=1$ to $H_a.length -1$}
\WHILE{$H_a[i] + H_b[k] > q$}
		\STATE if $k>0$: $k= k-1$
	\ENDWHILE
\STATE $j = k$
	\WHILE{$H_a[i] + H_b[j] < q$}
		\STATE $j = j+1$
		\STATE if $j >= H_b.length$: break for-loop
	\ENDWHILE 
	\WHILE{$H_a[i] + H_b[j] <r$}
	\STATE  $E = E \cup (i, j)$
	\STATE if $j < H_b.length$: $j = j + 1$
	\STATE else: break for-loop
	\ENDWHILE
\ENDFOR
\RETURN the set $E$
\end{algorithmic}
}
%
%
%
%
%
%
\caption{ \label{main_alg} \footnotesize [CRoP algorithm] A high-level description for a single run of our algorithm. 
The description of the hash functions $h: \mathcal{E} \rightarrow [\kappa]$ and $s: \mathcal{E} \rightarrow \{-1,1\}$ is sent to all processors before the computation starts.
For a detailed pseudocode description of {\sc Space-Saving} and {\sc Count-Sketch} the reader is referred to the original works \cite{cs,spacesaving}. After processing the input matrices one can obtain an estimate for individual entries from the corresponding {\sc Space-Saving} summary or {\sc Count-Sketch} estimator.
}  
\end{figure}

%
%
%
%
%
%

\section{Analysis}


\paragraph{Load balancing.}

First we show that pairwise independent hashing of entries guarantees good load balance among processors. We show that for outer products for which the number of non-zero entries is considerably bigger than the number of processors, {\sc CRoP} achieves good scalability with high probability.  

\begin{thmx} \label{load_bal_thm}
Suppose we run $K$ instances of {\sc CRoP} with disjoint intervals of the same size, on an input of column and row vectors $a, b \in \mathbb{R}^n$. Let $W = \frac{|a||b|}{K}$. Then for $W \ge \frac{|a| + |b|}{\lambda^2}$ the probability that the number of entries processed by a given instance will deviate by more than $\lambda W$ from its expected value is bounded by $\frac{1}{(|a| + |b|)}$ for any $\lambda > 0$. 
\end{thmx}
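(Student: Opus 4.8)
The plan is a standard second-moment (Chebyshev) argument. Fix one of the $K$ instances; by Lemma~\ref{rtime_lemma} it processes exactly the set $\mathcal{E}_\mathcal{L}^{ab}$, where $\mathcal{L}\subseteq[\kappa]$ is its interval, of size $\kappa/K$ (I will assume $K\mid\kappa$; otherwise one replaces $1/K$ by $\lfloor\kappa/K\rfloor/\kappa$ below, which changes nothing essential). So the quantity of interest is $X=|\mathcal{E}_\mathcal{L}^{ab}|$, the number of pairs $(i,j)$ with $a(i)\neq 0$, $b(j)\neq 0$, and $h(i,j)=(h_a(i)+h_b(j))\bmod\kappa$ lying in $\mathcal{L}$. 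Write $X=\sum_{(i,j)}X_{ij}$, the sum over the $|a|\,|b|$ nonzero entries of the outer product, with $X_{ij}$ the indicator of the event $h(i,j)\in\mathcal{L}$. First I would note that pairwise independence of $h_a$ makes $h_a(i)$ uniform on $[\kappa]$, and since $h_b$ is an independent function, $h(i,j)$ is uniform on $[\kappa]$ for every fixed entry; hence $\mathbb{E}[X_{ij}]=|\mathcal{L}|/\kappa=1/K$ and $\mathbb{E}[X]=|a|\,|b|/K=W$, matching the ``expected value'' in the statement.

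Next I would establish that the indicators $\{X_{ij}\}$ are pairwise independent, which is the only place care is needed. It suffices to show $(h(i,j),h(i',j'))$ is uniform on $[\kappa]^2$ for any two distinct entries $(i,j)\neq(i',j')$. If $i\neq i'$ and $j\neq j'$ this is immediate: $h(i,j)$ and $h(i',j')$ are functions of the disjoint variable pairs $(h_a(i),h_b(j))$ and $(h_a(i'),h_b(j'))$, each of which is uniform on $[\kappa]^2$ by pairwise independence of $h_a$ and of $h_b$ together with the mutual independence of the two functions; hence the two hash values are independent and uniform. The subtle case is $i=i'$, $j\neq j'$ (and symmetrically $j=j'$, $i\neq i'$), where the two hash values share the summand $h_a(i)$: here I would condition on $h_a(i)=v$, use that $(h_b(j),h_b(j'))$ is uniform on $[\kappa]^2$ and independent of $h_a(i)$, and sum over the $\kappa$ equiprobable values of $v$ to get $\Pr[h(i,j)=c\wedge h(i,j')=c']=\sum_v\frac1\kappa\cdot\frac1{\kappa^2}=\frac1{\kappa^2}$ for all $c,c'$, which is exactly independence. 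This shared-summand case is the step I expect to be the main obstacle to a fully rigorous write-up; everything else is mechanical.

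With pairwise independence in hand, $\mathrm{Var}(X)=\sum_{(i,j)}\mathrm{Var}(X_{ij})=|a|\,|b|\cdot\tfrac1K\bigl(1-\tfrac1K\bigr)\le |a|\,|b|/K=W$. Chebyshev's inequality then gives, for any $\lambda>0$, $\Pr[\,|X-W|>\lambda W\,]\le\mathrm{Var}(X)/(\lambda W)^2\le 1/(\lambda^2 W)$. Finally I would invoke the hypothesis $W\ge(|a|+|b|)/\lambda^2$, which rearranges to $\lambda^2 W\ge|a|+|b|$, and conclude $\Pr[\,|X-W|>\lambda W\,]\le 1/(|a|+|b|)$, the claimed bound. (In the case $K\nmid\kappa$ the variance is still at most the mean, so the conclusion is unchanged up to lower-order terms.)
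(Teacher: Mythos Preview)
Your proposal is correct and follows essentially the same Chebyshev argument as the paper: compute $\mathbb{E}[X]=W$, bound $\mathrm{Var}(X)\le W$ via pairwise independence of the indicators, and then apply Chebyshev together with the hypothesis $W\ge(|a|+|b|)/\lambda^2$. The only difference is that you spell out the verification that $h(i,j)=(h_a(i)+h_b(j))\bmod\kappa$ is pairwise independent---including the shared-coordinate case---whereas the paper simply asserts this (in the proof of Lemma~\ref{rtime_lemma}) as ``easy to show'' and uses it without further comment.
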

\begin{proof}
We distribute the nonzero entries in $ab$ to $K$ processors. 
We consider one of the $K$ processors, say $P_1$.  Let $Y_e$ be an indicator random variable denoting whether the entry $e$ is hashed to the range of $P_1$. Clearly, $E[Y_e] = 1/K$. Let $X = \sum_{e \in ab} Y_e$. Thus, we have $E[X] = W$. Since the hash function is pairwise independent for the variance of $X$ we have $V[X] = \sum_{e \in ab} V[Y_e] = \sum_{e \in ab} (E[Y_e^2] - E[Y_e]^2) \leq \sum_{e \in ab} \frac{1}{K} = \frac{|a||b|}{K} = E[X] = W$. 
Using Chebyshev's inequality and $W \ge \frac{|a| + |b|}{\lambda^2}$ we estimate the probability that this number deviates by more than $\lambda$ to:
$$
  \Pr[|X-E[X]| \geq \lambda W]\leq \frac{V[X]}{\lambda^2 W^2} =$$ $$\frac{V[X]K^2}{(\lambda|a||b|)^2} = \frac{K}{\lambda^2|a||b|} \le \frac{1}{(|a| + |b|)} 
$$
\end{proof}

The above theorem essentially says that when we expect a sufficient number of entries per core in a given outer product, the probability that the work load will deviate by more than a small constant factor from its expectation, namely $W/K$, is very small. Thus, we expect that the computation of only a small fraction of the outer products will not be well balanced among processors and this will not considerably affect the total performance.

In the following we present results for the estimates based on \textsc{Space-Saving}
(giving guarantees on the upper/lower bounds), as well as the unbiased \textsc{Count-Sketch}
estimator returned by our algorithm. 
%
%
%
%
%
%
\medskip
\paragraph{Quality of estimates.}

\begin{thmx}\label{cms_thm}
Let $A, B$ be $n \times n$ nonnegative matrices such that there are $d$ nonzero entries in $AB$ adhering to a Zipfian distribution with parameters $C$ and $z$. After processing all outer products $a_i b_i$, $1 \le i \le n$, by {\sc CRoP} with $k$ buckets and a Space-Saving data structure with $\ell$ entries
\begin{itemize}
\item if $z>1$, an entry with weight at least $\Omega(\frac{C}{(\ell k)^z})$ is recorded in one of the Space-Saving summaries with probability more than 1/2. By running $O(\log \frac{\ell k}{\delta})$ instances of {\sc CRoP} in parallel, we report all entries with weight $\Omega(\frac{C}{(\ell k)^z})$ or more with probability at least $1-\delta$.

\item if $z<1$, an entry with weight at least $\Omega(\max(\frac{C}{(\ell k)^z}, \frac{Cd^{1-z}}{\ell k}))$ is recorded in one of the Space-Saving summaries with probability more than 1/2. By running $O(\log \frac{K}{\delta})$ instances of {\sc CRoP} in parallel, for $K = \max(\ell k, \frac{d}{\ell k})$, we report all entries with weight  $\Omega(\max(\frac{C}{(\ell k)^z}, \frac{Cd^{1-z}}{\ell k}))$ or more with probability at least $1-\delta$.
\end{itemize}
\end{thmx}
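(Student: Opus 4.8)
The plan is to prove the two claims --- recording a given heavy entry with probability exceeding $1/2$, and, by independent repetition, reporting all heavy entries with probability at least $1-\delta$ --- by controlling, for a fixed heavy entry $e$, the total weight $m_\beta$ of the sub-stream that is hashed to the bucket $\beta=h(e)$ containing $e$. Recall from the proof of Lemma~\ref{rtime_lemma} that $h(i,j)=(h_a(i)+h_b(j))\bmod k$ is pairwise independent, so each entry $e'\neq e$ lands in $\beta$ with probability $1/k$, conditionally on $h(e)=\beta$. By the {\sc Space-Saving} guarantee recalled in Section~\ref{sec:prel} (in its weighted, positive-update form), $e$ is recorded in the summary of bucket $\beta$ whenever $w_e > m_\beta/\ell$. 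Hence it suffices to show $\Pr[\,m_{h(e)}\ge \ell\,w_e\,] < 1/2$ for every entry $e$ whose weight meets the stated threshold.

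Write $m_{h(e)} = w_e + Y$ with $Y = \sum_{e'\neq e} w_{e'}\,\mathbf{1}[h(e')=h(e)]$, and split $Y = Y_H + Y_T$ according to whether $e'$ is among the $s:=\Theta(k)$ heaviest entries of $AB$ or not (with a small enough constant in $\Theta(k)$). The head part is handled by a union bound: $\Pr[Y_H>0]\le s/k$, a small constant. The tail part is handled by Markov's inequality: $E[Y_T\mid h(e)=\beta] = \tfrac1k\sum_{i>s} C/i^z$, and the standard estimate of this $p$-series tail gives $\Theta(Ck^{-z})$ when $z>1$ and $\Theta(Cd^{1-z}/k)$ when $z<1$; thus $Y_T$ exceeds a large constant times its mean with probability below a small constant. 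Combining, with probability at least $3/4$ we get $m_{h(e)}\le w_e + O(Ck^{-z})$ for $z>1$ (resp.\ $w_e + O(Cd^{1-z}/k)$ for $z<1$), which is below $\ell\,w_e$ as soon as $(\ell-1)w_e$ dominates the error term, i.e.\ as soon as $w_e=\Omega(Ck^{-z})$ for $z>1$ and $w_e=\Omega(Cd^{1-z}/k)$ for $z<1$. Since $\ell$ is a constant these are precisely the thresholds $\Omega(C/(\ell k)^z)$ and $\Omega(\max(C/(\ell k)^z,\ Cd^{1-z}/(\ell k)))$ in the statement; for $z<1$ one reconciles the two terms of the max by noting that when $\ell k<d$ the second term dominates, while when $\ell k\ge d$ the first term is at most the minimum entry weight $C/d^z$, which is itself $\Omega(Cd^{1-z}/(\ell k))$, so that case reduces to the bound just established.

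For the boosting claim, run $R$ independent copies of {\sc CRoP} with fresh hash functions and report every entry that is recorded in at least one copy. A fixed entry above the threshold fails to be recorded in all copies with probability at most $2^{-R}$. The number $n_h$ of entries meeting the threshold is $O(\ell k)$ when $z>1$, and a short case analysis on the relative sizes of $\ell k$ and $d$ shows $n_h\le K=\max(\ell k,\ d/(\ell k))$ when $z<1$. A union bound over these $n_h$ entries bounds the overall failure probability by $n_h\,2^{-R}$, which is at most $\delta$ for $R=O(\log(\ell k/\delta))$ (resp.\ $R=O(\log(K/\delta))$).

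I expect the head/tail split of the previous paragraph to be the crux. A naive Markov bound applied to $Y$ as a whole only yields the weaker threshold $w_e=\Omega(E[Y])=\Omega(W_{\mathrm{tot}}/k)$, which for $z>1$ is $\Omega(C/k)$ --- far larger than the target $C/(\ell k)^z$. The point is that the handful of super-heavy entries have conditional expected contribution $\Theta(C/k)$ to any fixed bucket, too large to absorb by Markov, yet each lands in $\beta$ only with probability $1/k$, so a union bound over the $\Theta(k)$ heaviest entries eliminates their contribution with constant probability, after which only the genuinely light tail --- of total weight $\Theta(Ck^{1-z})\ll C$ when $z>1$ --- remains, and Markov on that tail delivers the sharp threshold. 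Lining up the constants in the $\Theta(k)$ cutoff and in the $\Omega(\cdot)$ thresholds against a fixed constant $\ell$, and dealing with the boundary regime $\ell k\approx d$ for $z<1$, are the remaining technical points.
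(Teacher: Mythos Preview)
Your proof follows essentially the same head/tail decomposition as the paper's: split the colliding entries into a heavy ``head'' and a light ``tail'', bound the head by a counting argument and the tail by Markov's inequality on its total weight, then invoke the {\sc Space-Saving} guarantee. The paper sets the cutoff at the threshold weight $\alpha m$ itself (so the head consists of $x=(C/\alpha m)^{1/z}$ entries) rather than at a fixed rank $s=\Theta(k)$, but this is largely cosmetic.

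There is one substantive difference. You handle the head by a union bound forcing $Y_H=0$, which limits you to $s=\Theta(k)$ head entries; the paper instead uses Markov on the \emph{number} of head entries in the bucket, allowing up to $\ell/2$ of them to collide, and then appeals to the fact that {\sc Space-Saving} with $\ell$ slots tolerates up to $\ell/2$ heavy items provided the residual (non-heavy) weight is below $\ell\alpha m/2$. This is what lets the paper take $x=\Theta(\ell k)$ head entries and recover the stated threshold $\Omega(C/(\ell k)^z)$ with the full $\ell$-dependence. Your argument yields only $w_e=\Omega(C\,k^{-z}/(\ell-1))$, which is $\Omega(C/(\ell k)^z)$ \emph{only} under your explicit assumption that $\ell$ is a constant. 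Since the theorem is stated with $\ell$ as a parameter in the bound (and the factor you lose is $\ell^{\,z-1}$, unbounded for $z>1$), this is a genuine if minor gap relative to the statement; you should either adopt the paper's ``allow $\ell/2$ collisions'' argument or state the constant-$\ell$ hypothesis up front.

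For the amplification step you take the OR over $R$ independent copies and union-bound over the $n_h$ heavy entries, whereas the paper takes a majority vote and applies Chernoff. Both correctly establish the recall guarantee in the theorem; the majority rule additionally controls false positives, but that is not asserted here. Your verification that $n_h\le K=\max(\ell k,\,d/(\ell k))$ for $z<1$ is only sketched; it does go through, but requires the case split $(\ell k)^2\lessgtr d$ that you allude to.
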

\begin{proof}
Let the minimum absolute weight for heavy entries be $\alpha m$, $\alpha > 0$. (At the end of the proof we will obtain bounds on $\alpha m$ depending on $k$.)
We will estimate the probability that a heavy entry $e$ is not reported.
From the Zipfian distribution we obtain that $x := (\frac{C}{\alpha m})^{\frac{1}{z}}$ entries
will have weight above $\alpha m$. Let $\B$ be the bucket $e$ hashes to.
We first consider the case that at most $\ell/2$ entries with weight above $\alpha m$ are hashed to $\B$. The expected number of heavy entries hashed to $\B$ is $x/k$. For $k \geq \frac{12x}{\ell}$ we get by Markov's inequality that the probability $p_1$ that more than $\ell/2$ heavy entries will land in $\B$ is at most $1/6$. 
As already discussed in Section~\ref{sec:prel}, if the total weight of non-heavy entries hashed to $\B$ is
less than $\ell\alpha m/2$ and no other heavy entries hash to $\B$, then all heavy entries will be reported. Let $w := \sum_{i=x+1}^d {C/i^z}$ be the total weight of non-heavy entries. 
In the following we will use the fact that $w = O(C(d^{1-z}))$ for $z<1$ and $w = O(C(x^{1-z}))$ for $z> 1$, where $d$ is the number of distinct entries in the matrix product.
This follows by integration of the corresponding continuous function. 
Then we expect non-heavy entries of total weight $w/k$ to land in $\B$.

Let $Y_{j}$ be an indicator random variable denoting whether the $j$th non-heavy entry is
hashed to $\B$ and $X = \sum_{j=1}^{d-x}Y_j$.
Clearly, $E[X] = w/k$. 
Applying Markov's inequality we obtain $p_2 := \Pr[X \geq 3 {w}/{k}] \leq 1/3$. 
We want $3{w}/{k} \leq \ell\alpha m/2$.
For $z<1$ we have $w = O(C(d^{1-z})$ thus together with the bound on the number of heavy entries in the bucket we set $k = \max(O(
\frac{1}{\ell}(\frac{C}{\alpha m})^{\frac{1}{z}}), O(\frac{C(d^{1-z})}{\ell\alpha m)})$.
Similarly, for $z>1$ we have $w = O(C(x^{1-z})$, hence the bound
$k = O(\frac{1}{\ell}(\frac{C}{\alpha m})^{\frac{1}{z}})$. Thus, for $z<1$, we will consider entries {\em heavy} if their weight is $\alpha m = \Omega(\max(\frac{C}{(\ell k)^z}, \frac{Cd^{1-z}}{\ell k}))$ and for $z>1$ if $\alpha m = \Omega(\frac{C}{(\ell k)^z})$. By the union bound the probability that an entry with weight at least
$\alpha m$ is not reported is at most $p_1+p_2 < 1/2$.

By running $t$ copies of the algorithm in parallel and reporting only entries that are reported in some summary in at least $t/2$ cases, we can amplify the probability for a correct estimate to $\exp(O(t))$. The analysis follows a standard application of Chernoff's inequality. Thus, for the given values of $t$ and $K$ for each heavy entry we bound the probability of not being reported to $\frac{\delta}{\ell k}$ and $\frac{\delta}{K}$ for $z>1$ and $z<1$, respectively. Since the number of heavy entries in each case is bounded by $\ell k$ or $K$, by the union bound we bound the error to $\delta$ that any heavy entry will not be reported.
\end{proof}
%
%
%
%
%
%
%
%
%
%
%
%
%
%
\medskip
{\sc Count-Sketch} was analyzed for Zipfian data in the original publication~\cite{cs}. 

\begin{thmx}~\cite{cs}\label{cs_thm}
Let $A, B \in \mathbb{R}^{n \times n}$ such that the entries in $AB$ adhere to a Zipfian distribution with parameters $C$ and $z>1/2$. After processing all outer products $a_k b_k$ by {\sc CRoP} with $k$ {\sc Count-Sketch} estimators, each entry is approximated with additive error of at most $\Omega(\frac{C}{k^z})$ with probability more than $1/2$. By running of $O(\frac{n}{\delta})$ instances in parallel the additive error of at most $\Omega(\frac{C}{k^z})$ for any entry  holds with probability at least $1-\delta$.  
\end{thmx}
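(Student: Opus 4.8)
The plan is to reduce the statement to the known point-query guarantee of {\sc Count-Sketch} on Zipfian streams from~\cite{cs}; the reduction is essentially forced by the consistency of the hash function.

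\textbf{Step 1 (reduction).} Since $h(i,j)=(h_a(i)+h_b(j))\bmod k$ depends only on the entry $(i,j)$ and not on which outer product contributed to it, and since the {\sc Count-Sketch} update is additive, after all $n$ outer products have been processed the array $\mathcal{CS}$ holds in bucket $c$ exactly $\sum_{(i,j):\,h(i,j)=c} s(i,j)\,(AB)_{ij}$. Hence the estimate $\mathcal{CS}[h(e)]\,s(e)$ returned for an entry $e$ is \emph{identical} to the estimate an ordinary single-row {\sc Count-Sketch} with $k$ buckets and sign hash $s$ produces on the stream consisting of the $d$ nonzero entries of $AB$ with their weights. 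So it suffices to analyze that single-row sketch on the weight vector $f$ whose sorted coordinates are $f_j=C/j^z$.

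\textbf{Step 2 (one-row analysis for Zipfian weights).} Write the estimate for $e$ as $f_e+Z_e$ with $Z_e=\sum_{e'\ne e,\,h(e')=h(e)} s(e)s(e')f_{e'}$. Since $s$ is pairwise independent and independent of $h$, $E[Z_e]=0$, which is the unbiasedness asserted. For the tail bound, let $T$ be the set of the $k/8$ heaviest entries; by Markov the event $\mathcal G$ that no entry of $T\setminus\{e\}$ shares $e$'s bucket holds with probability at least $7/8$. On $\mathcal G$, $Z_e$ equals $Z_e^{\mathrm{tail}}:=\sum_{e'\notin T,\,e'\ne e,\,h(e')=h(e)} s(e)s(e')f_{e'}$, whose variance --- using pairwise independence of $h$ (already observed in the proof of Lemma~\ref{rtime_lemma}) and of $s$ --- is
\[
  \operatorname{Var}[Z_e^{\mathrm{tail}}]=\frac1k\!\!\sum_{e'\notin T,\,e'\ne e}\!\! f_{e'}^{2}
  \le \frac{C^{2}}{k}\sum_{j>k/8} j^{-2z}=O(C^{2}k^{-2z}),
\]
where the last bound is the integral estimate $\sum_{j>k/8} j^{-2z}=O(k^{1-2z})$, valid precisely because $z>1/2$ (the same type of integration used in the proof of Theorem~\ref{cms_thm}). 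Combining $\Pr[\,\overline{\mathcal G}\,]\le1/8$ with Chebyshev applied to $Z_e^{\mathrm{tail}}$ and a large enough constant in front of $Ck^{-z}$ gives: the additive error is $O(Ck^{-z})$ with probability more than $1/2$. This holds for every entry, including light ones of rank $\gg k$, which may occasionally collide with a very heavy entry and be grossly misestimated --- a rare event absorbed into the failure probability.

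\textbf{Step 3 (amplification).} Running $t$ independent instances of {\sc CRoP} in parallel and returning, per entry, the median of the $t$ estimates, a Chernoff bound (exactly as in the proof of Theorem~\ref{cms_thm}) drops the per-entry failure probability from below $1/2$ to $e^{-\Omega(t)}$; a union bound over the at most $n^2$ nonzero entries of $AB$ then makes the additive error $O(Ck^{-z})$ hold for all entries simultaneously with probability $1-\delta$ once $t=O(\log(n/\delta))$, which lies within the stated number of instances. The only real content, given that the estimator coincides with plain {\sc Count-Sketch}, is the Zipfian tail estimate of Step~2: one must balance the crude Markov bound ``no heavy collision'' against the Chebyshev bound on the tail noise, and the exponent $z$ in the final $O(Ck^{-z})$ comes entirely from the hypothesis $z>1/2$, which is what makes $\sum_{j>k} C^{2}j^{-2z}=\Theta(C^{2}k^{1-2z})$ converge and be small --- for $z\le1/2$ no sketch of size $\mathrm{poly}(k)$ achieves this. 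The {\sc CRoP}-specific part is Step~1, and there it is immediate.
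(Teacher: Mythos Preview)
Your proof is correct. Note, however, that the paper does not actually prove this theorem: it is stated with a citation to~\cite{cs} and preceded by the sentence ``{\sc Count-Sketch} was analyzed for Zipfian data in the original publication~\cite{cs}.'' So there is nothing to compare against on the paper's side beyond that reference.

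What you supply is precisely the argument that~\cite{cs} gives, adapted to the present setting. Your Step~1 is the only {\sc CRoP}-specific content, and it is handled correctly: additivity of the {\sc Count-Sketch} update together with the fact that $h(i,j)$ depends only on the entry (not on which outer product contributed it) makes the final sketch identical to a single-row {\sc Count-Sketch} of the weight vector of $AB$. Your Step~2 is the standard head/tail split --- Markov on the $k/8$ heaviest entries, Chebyshev on the rest using pairwise independence of $h$ and $s$ --- and the integral bound $\sum_{j>k/8} j^{-2z}=O(k^{1-2z})$ is exactly where $z>1/2$ enters. Your Step~3 is the usual median amplification; you correctly observe that $t=O(\log(n/\delta))$ suffices, which is in fact tighter than the $O(n/\delta)$ written in the theorem statement and well within it.
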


%
%
%
%
%

\paragraph{A lower bound.}

We present a lower bound for the space needed to report the heaviest entry in a
matrix product by the column-row method. 
Of course, this applies also to any harder problem, such as reporting a larger set of heavy entries, with estimates.

\begin{thmx} \label{lb_thm}
Any algorithm that always outputs the heaviest entry in a matrix product $AB$, or even just the weight of the heaviest entry, in only one pass over the outer products $a_i b_i$, $i \in [n]$, must encode in its state all entry weights, in the sense that if two prefixes of the outer products differ in the weight of some entry, the algorithm must be in different states after seeing the prefixes.
\end{thmx}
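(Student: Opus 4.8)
The plan is to argue by a standard adversary/fooling-set argument. Suppose, for contradiction, that there exist two prefixes $a_1 b_1, \ldots, a_t b_t$ and $a_1' b_1', \ldots, a_t' b_t'$ that put the algorithm in the same state, yet the partial products $\sum_{k \le t} a_k b_k$ and $\sum_{k \le t} a_k' b_k'$ differ in the weight of some entry $(i,j)$; say the first gives entry $(i,j)$ weight $\mu$ and the second gives it weight $\mu' \ne \mu$. I would then exhibit a common suffix of outer products that forces a wrong answer in at least one of the two executions, contradicting correctness.

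The key step is the construction of the suffix. The idea is to append a block of outer products whose partial sum is a matrix $M$ that is large and ``flat'' everywhere except that it can be tuned at entry $(i,j)$, so that whichever of $\mu,\mu'$ is added, the heaviest entry (and its weight) is determined and differs between the two executions. Concretely, first I would pad with outer products so that every entry $(i',j')$ other than $(i,j)$ receives a large common value $V$, chosen with $V$ strictly larger than $\mu$ and $\mu'$ but with $V$ and $V + \max(\mu,\mu')$ separated from each other appropriately; this can be done with $O(n)$ rank-one terms (e.g.\ using all-ones-type column/row vectors, subtracting off the $(i,j)$ contribution with one extra term). After this suffix, in the first execution the total weight of $(i,j)$ is $\mu + 0$ while every other entry has weight $V$, and in the second execution $(i,j)$ has weight $\mu' + 0$; by choosing which of $\mu,\mu'$ is the smaller, I can additionally append one more rank-one term supported on row $i$ and column $j$ that raises $(i,j)$ above $V$ in exactly one of the two cases, so that the identity of the heaviest entry, or at least its reported weight, must differ. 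Since the algorithm is in the same state before the suffix and the suffix is identical, it produces the same output in both executions — a contradiction. Hence distinct entry-weight vectors must lead to distinct states.

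A minor wrinkle is that the theorem speaks of prefixes of outer products, and the outer products in a valid instance have the column-row form $a_k b_k$; I should make sure the padding suffix is itself a legal sequence of outer products over $\mathbb{R}^{n\times n}$ (which it is, since any matrix of the form $\mathbf{1}\mathbf{1}^{T}$, or a single standard-basis outer product $e_i e_j^{T}$, is rank one). I would also note that the argument only uses that the algorithm outputs the weight of the heaviest entry, so it covers the stronger ``reports a set of heavy entries'' problem as claimed in the remark preceding the statement.

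The main obstacle I expect is purely bookkeeping: choosing the padding value $V$ and the extra correcting term so that the heaviest-entry weight is provably different in the two executions in all cases (in particular handling the sub-case $\mu,\mu' < V$ versus the tie-breaking when one of them would equal $V$), and making sure all the matrices used in the suffix genuinely arise as outer products so the reduction stays within the model. Once the suffix is pinned down, the contradiction is immediate. The counting consequence — the state must take at least as many distinct values as there are possible entry-weight vectors, hence $\Omega$ of the full description — then follows, and gives the advertised space lower bound.
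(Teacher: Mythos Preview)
Your approach is the same fooling-set/adversary argument the paper uses: assume two prefixes with different weight vectors lead to the same state, append a common suffix that forces different correct outputs, and derive a contradiction. The paper's proof is in fact terser than yours and does not spell out the suffix at all.

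One small bookkeeping slip in your concrete suffix: after padding, the entries $(i',j')\neq(i,j)$ do not have weight exactly $V$ but rather $V$ plus whatever they already carried from the prefix, and those prefix contributions may differ between the two executions; so the ``flat at $V$'' picture is not literally true. The simpler construction that avoids this entirely is to skip the flat padding and just append a single term $c\cdot e_i e_j^{T}$ with $c$ large enough that $(i,j)$ becomes the (weak) maximum in both executions; then the reported maximum weight is $\mu+c$ in one run and $\mu'+c$ in the other, which already differ. This is essentially what the paper's one-line ``extend the stream to make $(i,j)$ heaviest'' is gesturing at, and it sidesteps the tie-breaking and padding issues you flagged.
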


\begin{proof} 
Since we are allowed only one pass over the column and row vectors of the input matrices, we will consider the outer products as a stream of updates for the output entries.
For a prefix of the stream consider the count vector that for each
entry records its weight in the prefix.
Let $\mathcal{A}$ be any algorithm that computes the heaviest entry in a data stream.
Consider two distinct weight vectors $x$ and $y$, corresponding to different stream prefixes.
We argue that $\mathcal{A}$ must
be in two different states after seeing these prefixes.
Suppose that the latter claim is not true, so that for $x\neq y$ the algorithm is in the same
state.
Since $x$ and $y$ differ, there must be at least one entry $(i,j)$ having distinct weights in the
two weight vectors.
But this implies that we can extend the streams with a sequence that makes the entry $(i,j)$ the heaviest
one in one of the weight vectors, say w.l.o.g. $x$, $(i,j)$ becomes the heaviest entry, while this
does not happen in $y$.
Still, the algorithm would be in the same state in both cases, returning the same result and weight.
This contradicts the assumption that $\mathcal{A}$ always returns the correct answer, so the assumption that $x$ and $y$ resulted in the same state must be false.
\end{proof}

Intuitively, the only generally applicable ways of storing the weights of all entries is to either store each weight explicitly, or store all column and row vectors seen so far.


\section{Frequent pair mining in transactional data streams: a case study}

Our original motivation was to parallelize the mining of frequent pairs from a high speed transaction stream. As already explained, the problem can be seen as an instance of sparse matrix multiplication $AA^T$ by the column row method. 

We use the following notation for transactional data streams. Let $\mathcal{I}$ be a set of items. A {\em transaction} $T \subset \mathcal{I}$ is a set of items.  We call a subset $p=\{i,j\} \subset \mathcal{I}$ a {\em pair}. For a stream of transactions $\mathcal{S}$, the {\em support} of a pair $p$ is the number of transactions containing $p$: $\sup(p) = |\{T: p \subseteq T\}|, T \in \mathcal{S}$. We can consider each transaction $T$ as a $\{0,1\}$-valued sparse $n$-dimensional vector $v_T$ and the set of pairs occurring in the given transaction as the nonzero entries above the diagonal in the outer product of $v_T$ and its transpose.

\subsection{Frequent Pattern Mining.}


Considerable work has been done on parallel and distributed implementations of frequent itemset methods. 
We refer to Zaki's 1999 survey~\cite{Zaki99} for an overview of some main techniques that have been investigated.
Subsequent work has focused on
either multi-core computing, aiming to minimize the overhead of access to shared data structures~\cite{conf/vldb/GhotingBPKNCD05,conf/vldb/LiL07}, or more recently GPU computing, focusing on representations of data that allow efficient GPU implementations (see e.g.~\cite{conf/fskd/LiZ10,GPU-FPM} for recent results).

To our best knowledge, all known methods for parallelising frequent itemset mining use either a shared data structure, or a ``vertical'', column-by-column layout of the data (for each item, a sorted list of the transactions where it occurs).
In the former case the shared data structure will become a bottleneck when scaling to many cores.
In the latter case one can no longer have a space-efficient streaming algorithm, because computing the vertical representation requires storing all transactions.
It is also well-known that for sparse matrices this representation, which supports fast computation of inner products, leads to higher time complexity than methods based on outer products. 
See~\cite{AmossenPaghICDT} for an overview of theoretical results on (serial) sparse matrix products.


\subsection{Related work on stream mining.}

Though we view the new load balancing technique as the main contribution of our work, its application to stream mining is interesting in its own right. 
In the following we briefly review related work in this area.

Manku and Motwani~\cite{manmot} first recognized the necessity for efficient
algorithms targeted at frequent itemsets in transaction streams and presented a heuristic approach generalizing  
their {\sc StickySampling} algorithm. A straightforward approach to mining of frequent pairs 
is to reduce the problem to that of mining frequent items by generating all item pairs in a given transaction.
Yu et al.~\cite{yuetal} and Campagna and Pagh~\cite{mining-stream} present randomized algorithms for transaction stream mining. The theoretical bounds on the quality of their estimates however heavily depend on the assumption that transactions are either generated {\em independently at random\/} by some process or arrive in a random order. It is already clear from the experiments of~\cite{mining-stream}
that such optimistic assumptions do not hold for many data sets.
For both schemes~\cite{yuetal,mining-stream} it is easy to find an ordering of essentially any transaction stream that breaks the randomness assumption, and makes it perform much worse than the theoretical bounds.
We therefore believe that a more conservative model is needed to derive a rigorous theoretical analysis, while exploiting observed properties of real data sets.

\subsection{Experimental evaluation.}

We worked with a cache-optimized Java implementation working only with primitive data types and used the built-in random number generator to store hash values in a table. Unless otherwise reported we ran experiments on as Mac Pro desktop equipped with Quad-Core Intel 2.8GHz and 16 GB main memory. In this architecture there are 8 cores and a total of 24 MB cache available but 2 cores share 6 MB of cache.
\\
\paragraph{Datasets.}

We evaluated the performance of our algorithm on the following datasets: Mushroom, Pumsb, Pumsb\_star and Kosarak, taken from the Frequent Itemset Mining Implementations (FIMI) Repository, Wikipedia -- crafted according to what is described in~\cite[Page~14]{msc-students}, and Nytimes and Pubmed  taken from the UCI Machine Learning Repository.

Table~\ref{table_datasets} summarizes the data sets used in experiments. In all cases, we use the order in which the transactions are given as the stream order.  

\begin{table}
{ \footnotesize
\centering
\begin{tabular}{c|c|c}
{\bf Dataset} & {\bf \# of pairs ($F_2$)} & {\bf \# of distinct pairs} \\ 
\hline\hline
Mushroom & $22.4\cdot 10^5$ & $3.65\cdot10^3$\\
\hline
Pumsb & $1360\cdot 10^5$ & $536\cdot10^3$\\
\hline
Pumsb\_star & $638\cdot 10^5$ & $485\cdot10^3$\\
\hline
Kosarak & $3130\cdot 10^5$ & $33100\cdot10^3$\\
\hline
Retail & $80.7\cdot 10^5$ & $3600\cdot10^3$\\
\hline
Accidents & $187\cdot 10^5$ & $47.3\cdot10^3$\\
\hline
Webdocs & $2.0\cdot 10^{11}$ & $> 7\cdot 10^{10}$ \\
\hline
Nytimes & $1.0\cdot 10^{10}$ & $> 5\cdot 10^{8}$ \\
\hline
Pubmed & $1.6\cdot 10^{10}$ & $> 6\cdot 10^{8}$\\
\hline
Wikipedia & $5.17\cdot 10^{11}$ & $> 5.8\cdot 10^{9}$ \\
\hline
\end{tabular}
\caption{\footnotesize Information on data sets for our experiments.
Nytimes and Pubmed are taken from the UCI Machine Learning Repository (Bag of Words data set).
The wikipedia dataset has been crafted according to what is described in~\cite[Page~14]{msc-students}. The other data sets are from the Frequent Itemset Mining Implementations (FIMI) Repository.
For the last three data sets the number of distinct pairs was estimated using a hashing technique
from~\cite{ACP}. In the context of sparse matrix multiplication the number of pairs is the total number of nonzero entries in all outer products and the number of distinct pairs is the number nonzero entries in the output matrix.}\label{table_datasets}
}
\end{table}
\paragraph{Accuracy of results.}

Our first set of experiments shows results on the precision of the counts obtained by CRoP using a {\sc Space-Saving} data structure of size 2,i.e., we record only two pairs.
The accuracy depends on the amount of space used and the number of pairs we are interested in reporting. Assume the pairs are sorted in decreasing order according to their frequency, we say that $i$th pair has rank $i$.
We made one experiment fixing the space usage, and looking at results for pairs of decreasing rank (computed exactly), and one that varies the space usage and considers the top-100 pairs. In practice, it may be hard to foresee how much space will be needed for a particular stream, so probably one will tend to use as much space as feasible with respect to running time (ensure in-cache hash table), or what amount of memory can be made available on the system.
A consequence of this will be even more precise results.
The results of our experiments on the Nytimes data set can be seen in Figure~\ref{fig:ss_estimates}(a). 

\paragraph{Varying space usage} 
We now investigate what happens to the quality of results when the space usage of CRoP is pushed to, and beyond, its limits.
For this, we chose to work with 3 representative data sets, namely Mushroom, Retail and Accidents, for decreasing space usage,
plotting the ratio between the upper and lower bounds for the top-100 pairs returned by our algorithm.
This is shown in Figure~\ref{fig:ss_estimates}(b) and we can see how the transition between very poor and very good quality is fairly
fast.

\begin{figure}[ht]
\begin{center}
\subfigure{
\includegraphics[scale=0.3]{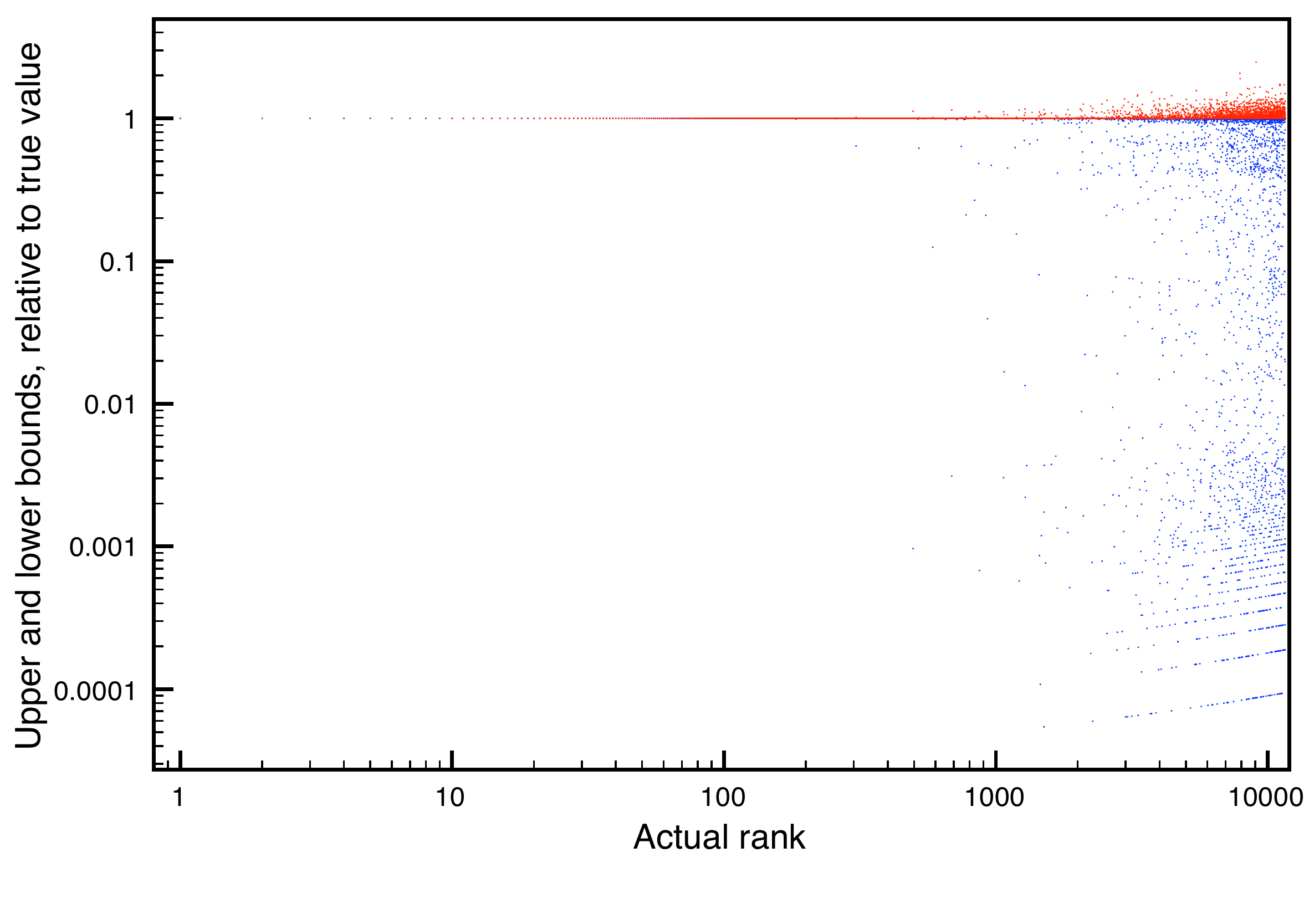}

}
\subfigure{
\includegraphics[scale=0.5]{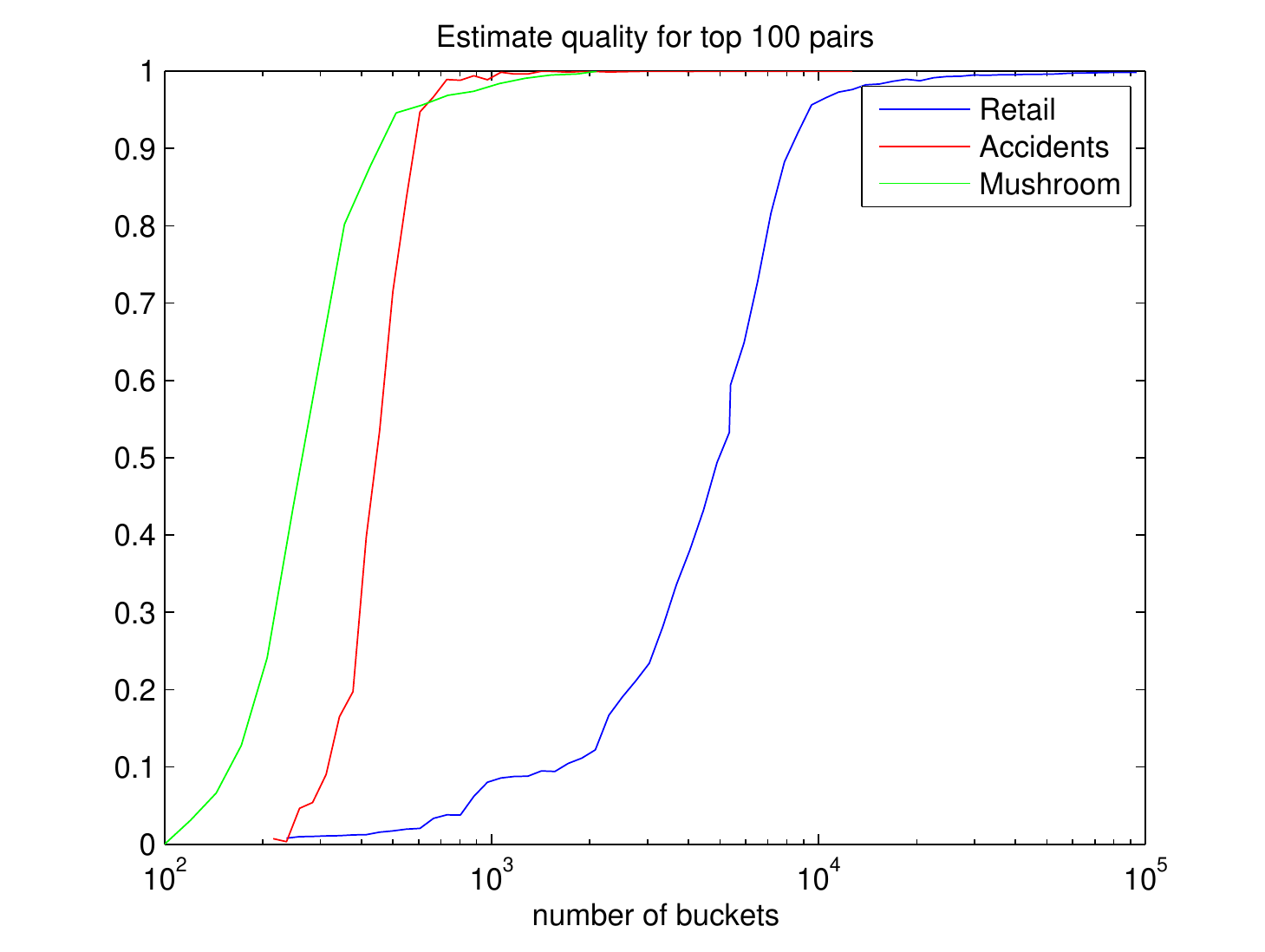}

}
\vspace{-3mm}
\end{center} 
\caption{\footnotesize (a) The plot on the left side shows upper and lower bounds for Nytimes computed by CRoP using $10^{6}$ buckets. Values are normalized by dividing by true support. Upper bounds shadow lower bounds, exact bounds are visible only as a red dot with no blue dot below. As can be seen, upper bounds are generally tighter than lower bounds. (b) On the right-hand side we plot the average ratio of lower and upper bound for top-100 pairs, for three representative data sets, as function of number of buckets. As can be seen, there is a quick transition from poor to excellent precision.}
\label{fig:ss_estimates}
\end{figure} 

\paragraph{Larger Space-Saving data structures.}
We estimated
the number of pairs with ratio at least 90\% between lower
and upper bound by varying the number of pairs 
in each bucket but keeping the total number of pairs of
our sketch constant. We counted the number of pairs until
the 100-th bad estimate, i.e. until 100 pairs have a lower
bound less than 90\% of its upper bound. Since our algorithm
is randomized we chose 100 as cut-off in order not to be
sensitive to outliers in the estimates. We fixed the total
number of pairs for the Kosarak dataset to 240000 and for
the Accidents dataset to 6000. We then varied the pairs per
bucket from 2 to 24. The effect was much better estimates
with the same space usage, see Figure~\ref{fig:ss}.

\paragraph{Count-Sketch Estimates} 
The result of the unbiased
COUNT-SKETCH estimator for the Kosarak dataset with
50000 buckets and 2 pairs per bucket is presented in Figure~\ref{fig:cs}.
We ran the algorithm 11 times and for each pair reported at
least 6 times we return the median of its estimates. The plot
shows the ratio of our estimates and the exact count of the
3000 pairs with highest support computed by Apriori. Not
reported pairs have ratio 0. In general we observe that the estimates given by {\sc Space-Saving} are tighter.


\begin{figure}[ht]
\begin{center}
\subfigure{
\includegraphics[scale=0.38]{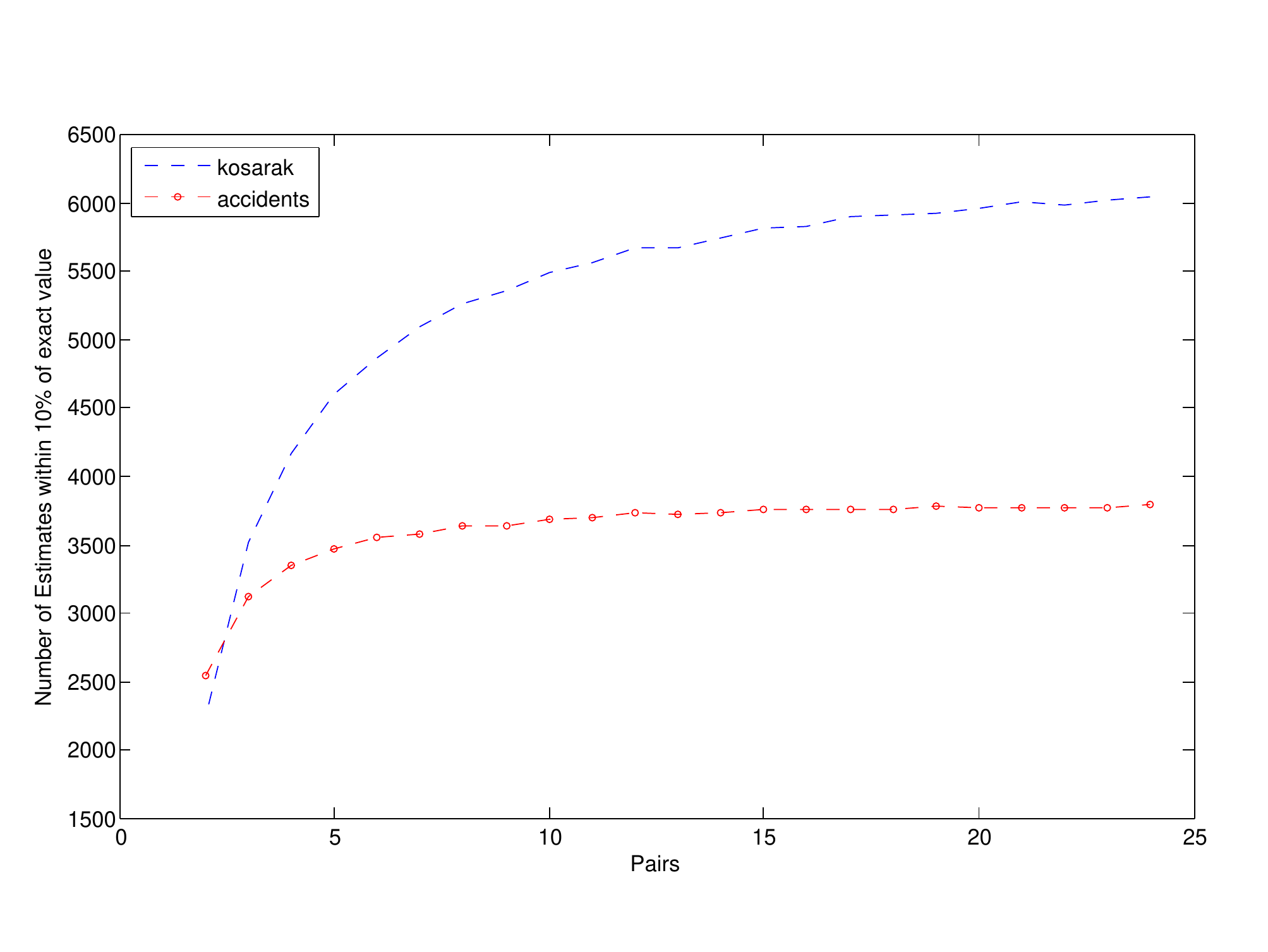}
\label{fig:ss}
}
\subfigure{
\includegraphics[scale=0.44]{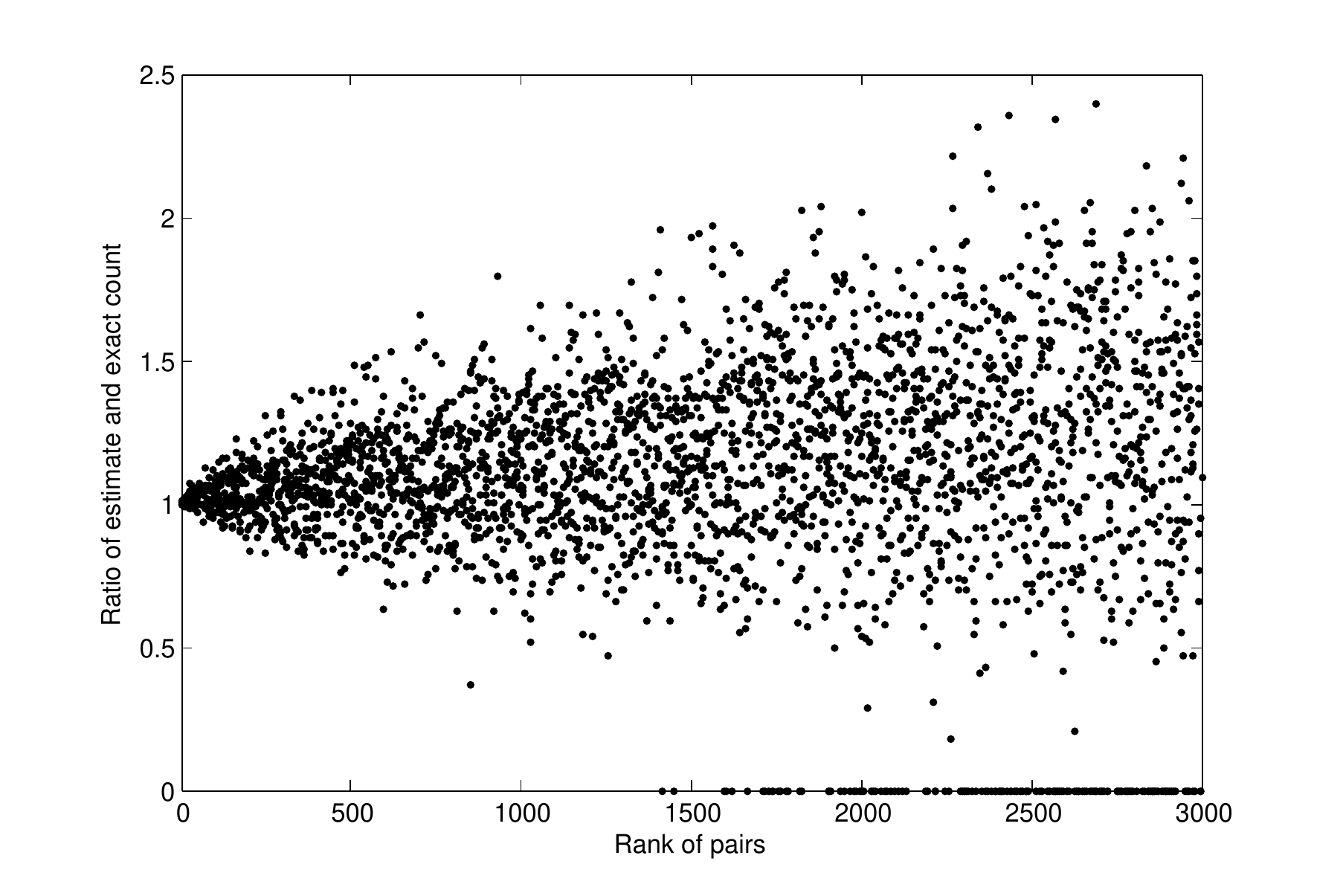}
\label{fig:cs}
}
\vspace{-3mm}
\end{center}
\caption{\footnotesize 
On the left side is the number of estimates with ratio of at least 90\% 
until 100 bad estimates have been
seen for a given number of pairs in each Space-Saving summary. As we see a sharp improvement in the quality of the
estimates can be observed by increasing the number of pairs
in each summary. 
On the right-hand side the plot represents the ratio of estimates and true count for the top 3000 pairs of Kosarak. All top 1400 pairs are reported
by our algorithm and for most of the pairs the estimates are within of factor 2. However, the estimates are
worse than those given by Space-Saving with the same total space usage.}
\end{figure}

\paragraph{Scalability.} In order to assess the scalability of our approach, we performed the following experiment: We ran the algorithm with a {\sc Space-Saving} data structure on the Kosarak dataset with a sketch of size $10000 i$ on $i$ cores. For each run we computed the recall for the top 10000 pairs in the dataset, see Table~\ref{table:kos_recall}. The idea is that each core can handle a sketch of size 10000 using only its local cache. The running time is the time needed for all processes to complete. Given that each core needs about 8 sec to read all input transactions, sort it according to the hash values of the items and then decide which pairs it is responsible for, a very good scalability is observed. The results indicate that using parallelism our approach is both more efficient and achieves more accurate results. 

\begin{table} 
{ \footnotesize
\centering
\begin{tabular}{c || c | c | c | c}
Cores & {\hspace*{8mm} 1} & {\hspace*{8mm} 2} & {\hspace*{8mm} 4} & {\hspace*{8mm}   8}\\ 
\hline 
Time(sec) & {\hspace*{6mm}22.13} & {\hspace*{6mm} 14.87} & {\hspace*{6mm} 11.73} & {\hspace*{6mm} 9.96}\\
\hline
Recall & {\hspace*{5mm}0.1119} & {\hspace*{5mm} 0.1988} & {\hspace*{5mm} 0.3156} & {\hspace*{5mm} 0.4812}\\
\hline
\end{tabular}
\caption{Recall for Kosarak.}  \label{table:kos_recall}
} 
\end{table}

\paragraph{Load balance.}

We have run experiments evaluating the distribution of pairs among the buckets.
When running these experiments, we kept track of the number of pairs processed by each core.
The results of these experiments are reported in Table~\ref{fig:loadBal}.
The numbers in the table confirm that the pairs spread evenly amongst the buckets 
meaning that parallelism greatly improves the running time of the algorithm, since there
will be no core that has to sustain a much larger burden than the others; such a negative
situation would bring the performances of the algorithm close to a sequential one.
\begin{table}
\footnotesize{
 \begin{center}
  \begin{tabular}{c|c|c|c|}\\
   {\bf Dataset} & {\bf Cores} & {\bf Average} &  {\bf Maximum}\\
   \hline
   \multirow{3}{*}{Retail}  &       8     & $895542$ & $934040$\\
              {}    &       4        & $1791084$  & $1808784$\\
              {}    &       2        & $3582168$ & $3585869$\\
   \hline
   \multirow{3}{*}{Kosarak} &  8        & $3203546$ & $3219769$\\
              {}    &       4        & $6407091$ & $6427398$\\
              {}    &       2        & $1.2814\cdot10^{7}$ & $1.2827\cdot10^{7}$\\
   \hline
%
   \multirow{3}{*}{Webdocs} & 8   & $8.928\cdot10^{8}$ & $8.9394\cdot10^{8}$\\
              {}    &       4        & $1.7856\cdot10^{9}$ & $1.7871\cdot10^{9}$\\
              {}    &       2        & $3.5712\cdot10^{9}$ & $3.573\cdot10^{9}$\\
   \hline
   \multirow{3}{*}{Nytimes} & 8        & $1.2497\cdot10^{9}$ & $1.251\cdot10^{9}$\\
              {}    &       4        & $2.4995\cdot10^{9}$ & $2.5005\cdot10^{9}$\\
              {}    &       2        & $4.9989\cdot10^{9}$ & $4.9995\cdot10^{9}$\\
   \hline
   Wikipedia & 8        & $6.4582\cdot10^{10}$ & $6.4651\cdot10^{10}$\\
%
   \hline
   \end{tabular}\caption{\footnotesize The table shows the average and maximum number of pair occurrences handled by each core. As can be seen, the maximum is quite close to the average.}\label{fig:loadBal}
 \end{center}
 }
\end{table}

\paragraph{Progress of processing.}

We ran four processes, and let the operating system allocate one to each core.
We tracked the progress of each process over time. 
The result for the Kosarak dataset are shown in Figure~\ref{fig_kosarak-progress} and for Nytimes dataset  in Figure~\ref{fig_nytimes-progress}.
As can be seen, the cores make almost identical progress when running at full speed.
In a data streaming setting this means that we can expect to manage streams that require all cores to run at close to maximum speed, while needing to cache only a small number of transactions.

\begin{figure}[ht]
\begin{center}
\subfigure{
\includegraphics[scale=0.32]{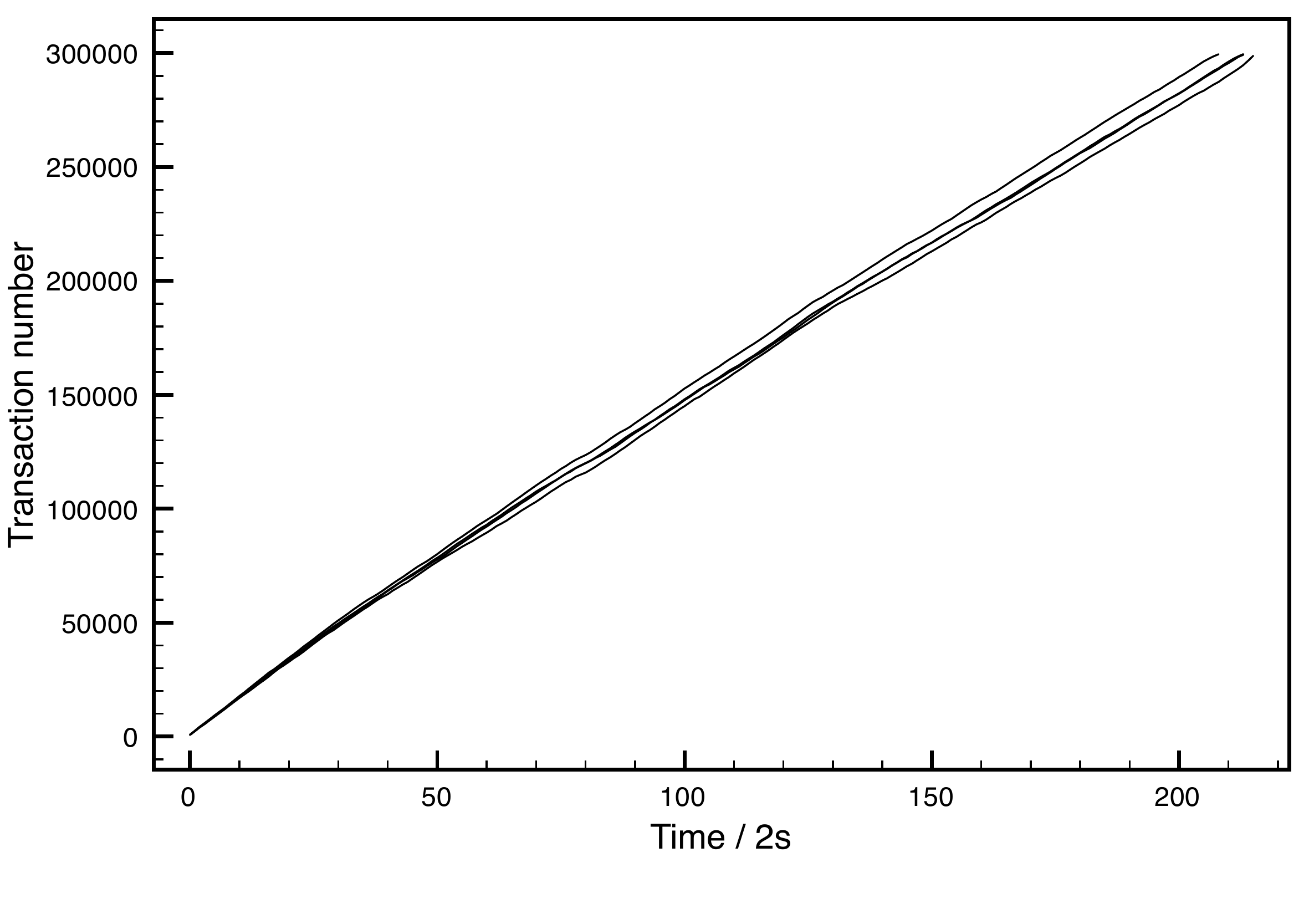}
\label{fig_nytimes-progress}
}
\subfigure{
\includegraphics[scale=0.32]{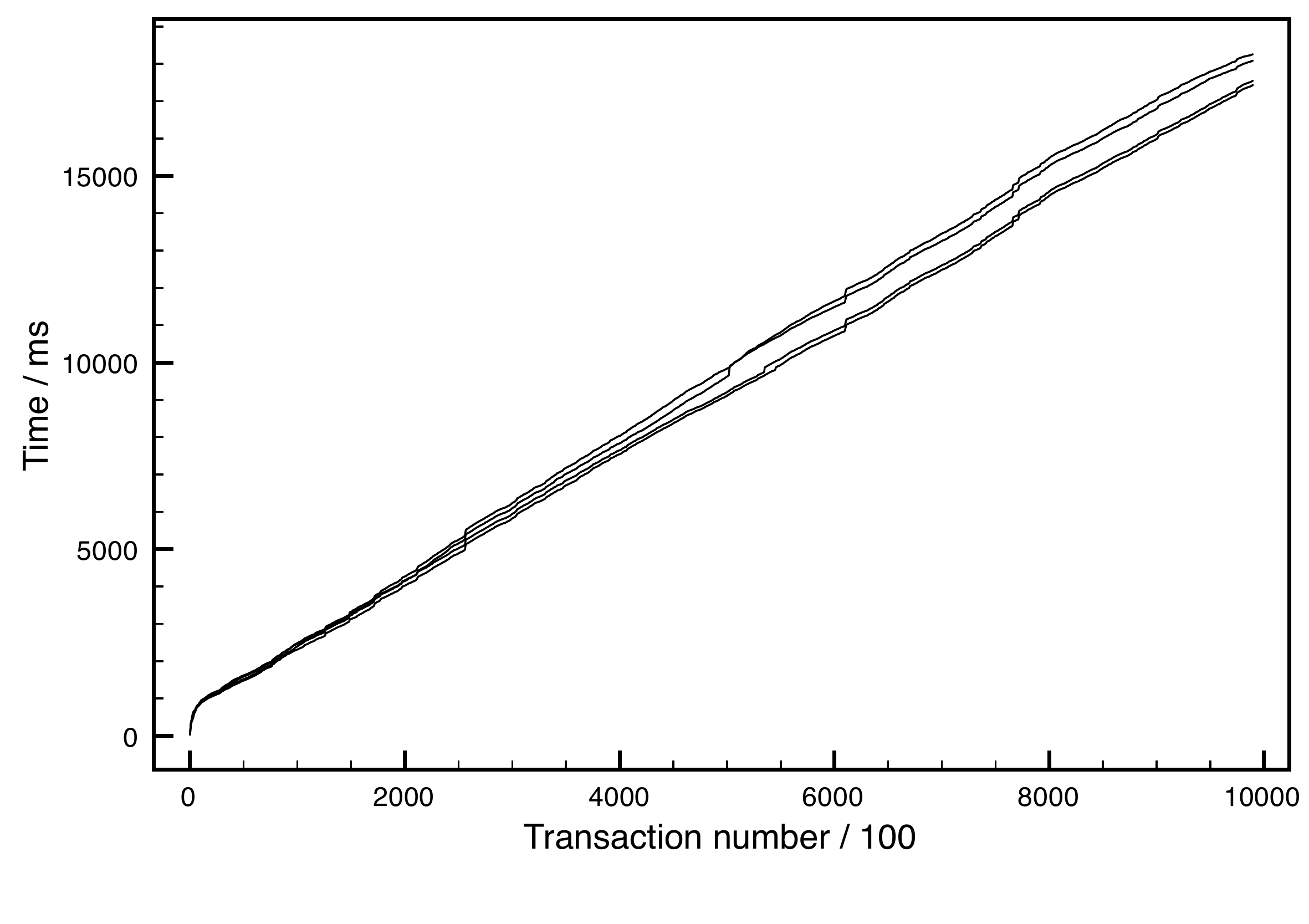}
\label{fig_kosarak-progress}
}
\vspace{-3mm}
\end{center}
\caption{\footnotesize  On the left side we see the progress of the current transaction number of each core running CRoP on the Nytimes data set, over a period of about 400 seconds and on the right-hand side the total time recorded for every 100th transaction when running CRoP with 4 cores on the Kosarak data set.}
\end{figure}

Somewhat surprisingly and inconsistent with the above results, we observed however that our experiments on a multicore CPU for various architectures do not always suggest very good scalability with increased number of cores. For example, the Webdocs dataset consists of long transactions only, thus the time for reading the input should not dominate. However, we observed 
that even if the processing time becomes better with more cores the advantages of having more cores become less and less pronounced. In particular, the running time between processing the data set with eight cores instead of four cores decreases by a factor of only 1.25 while the ratio of the running time between one core and two cores is about 1.85.  The reason is that each core has a small amount of dedicated L2 cache which is not sufficient to keep the whole interval of the hash table assigned to it and this leads to memory contention among processes. This means the potential of our method can be fully exploited when distributing the work to several processors or when we work with rather small sketches of the data stream. 

More precisely, experiments have been carried out in order to verify how the algorithm scales, in terms of time,
when parallel computations are used.
We ran the algorithm on various datasets using different number of cores in order to highlight the parallel nature of the algorithm. Table~\ref{table:speedtest} reports some of the results we obtained.

As one can see the exact running times for the evaluation of the scalability of our algorithm appear somewhat cumbersome since the improvement in running times does not suggest very good scalability.  This is not due to a flaw in our implementation but a consequence of the specific Xeon E5570 architecture. Each core has a small amount of dedicated cache, namely 256 KB. In order to obtain correct estimates we need a large number of buckets which do not fit in cache and this leads to memory contention. For considerably smaller size of our sketch we achieve almost perfect scalability. We ran another set of experiments on a Mac Pro desktop equipped with Quad-Core Intel 2.8GHz. In this architecture there are 8 cores and a total of 24 MB cache available but 2 cores share 6 MB of cache. The running time for 4 parallel processes was about two times better than the running time for 2 parallel processes but the improvement for 8 processes was not as good, clearly indicating cache contention among 8 cores and an optimal use for 4 processes.

For a large dataset with a high number of pairs such as Nytimes our simple Java implementation nevertheless processed almost 100 million pairs per second on an 8-core Mac Pro, the total running time was about 110 seconds. The sketch contained 400000 buckets which was enough to find the exact counts of the top few hundred pairs. 
The throughput of a competing hash table solution can be upper bounded by the number of updates of random memory locations possible (disregarding time for hash function computation, and other overheads).
On the Mac Pro the number of such updates per second was estimated to around 50 millon per second, when updating a 1 GB table using 8 cores.
This means that we are at least a factor of 2 faster than any implementation based on a large, shared data structure.  


\begin{table}
 \begin{center}
  \begin{tabular}{c||c|c|c} 
   {\bf Dataset} & {\bf \# of cores} &  {\bf \textit{ms} on \# cores} & {\bf \textit{ms} $1$ core} \\
   \hline\hline
   \multirow{3}{*}{Retail}      &       $8$        &  $1321$               & {} \\
     \cline{2-3}
     {}                         &       $4$        &  $1193$               & $2001$\\
     \cline{2-3}
       {}                       &       $2$        &  $1512$               & {}\\
    \hline
   \multirow{3}{*}{Kosarak} &       $8$        &  $1551$               & {} \\
     \cline{2-3}
      {}                        &       $4$        &  $1586$               & $2881$ \\
     \cline{2-3}
       {}                       &       $2$        &  $1997$               & {}\\
    \hline
   \multirow{3}{*}{Webdocs}     &       $8$        &  $299153$             & {} \\
     \cline{2-3}
      {}                        &       $4$        &  $357679$             & $891565$ \\
     \cline{2-3}
       {}                       &       $2$        &  $482111$             & {}\\
    \hline
   \multirow{3}{*}{Nytimes}     &       $8$        &  $443119$             & {} \\
     \cline{2-3}
      {}                        &       $4$        &  $524553$             & $1313698$ \\
     \cline{2-3}
       {}                       &       $2$        &  $689058$             & {}\\
    \hline
   \multirow{3}{*}{Wikipedia}   &       $8$        &  $27526403$           & {} \\
     \cline{2-3}
      {}                        &       $4$        &  $35397110$           & $93477243$ \\
     \cline{2-3}
       {}                       &       $2$        &  $53795313$              & {}\\
    \hline
  \end{tabular}
\caption{Experiments ran on an Intel Xeon E5570 2.93 GHz equipped with 23 GB of RAM;
the OS is GNU/Linux, kernel version 2.6.18.
The number of processes used is 8 for all four datasets.
Times are given in milliseconds (\textit{ms}). The number of buckets for the largest datasets is of the order $2^{20}$.
We can observe that in any case, millions of pairs per second were manipulated by
the algorithm.}\label{table:speedtest}
 \end{center}
\end{table}

\bibliographystyle{plain}
\bibliography{final_version}

\begin{thebibliography}{10}

\bibitem{msc-students}
E.~Abusland and M.~Markovics.
\newblock Implementing and evaluating a sampling-based approach to association
  mining on mapreduce.
\newblock Master's thesis, IT University of Copenhagen, 2011.

\bibitem{ACP}
R.~R. Amossen, A.~Campagna, and R.~Pagh.
\newblock Better size estimation for sparse matrix products.
\newblock In {\em APPROX-RANDOM}, pages 406--419, 2010.

\bibitem{AmossenPaghICDT}
R.~R. Amossen and R.~Pagh.
\newblock Faster join-projects and sparse matrix multiplications.
\newblock In {\em ICDT}, pages 121--126, 2009.

\bibitem{thesis}
A.~Bulu\c{c}.
\newblock {\em Linear Algebraic Primitives for Parallel Computing on Large
  Graphs}.
\newblock PhD thesis, University of California, Santa Barbara, CA, March 2010.

\bibitem{bisam}
A.~Campagna and R.~Pagh.
\newblock Finding associations and computing similarity via biased pair
  sampling.
\newblock In {\em ICDM}, pages 61--70, 2009.

\bibitem{mining-stream}
A.~Campagna and R.~Pagh.
\newblock On finding similar items in a stream of transactions.
\newblock In {\em ICDM Workshops}, pages 121--128, 2010.

\bibitem{cannon}
L.~E. Cannon.
\newblock A cellular computer to implement the kalman filter algorithm.
\newblock In {\em Technical report, Ph.D. Thesis, Montana State University},
  1969.

\bibitem{cs}
M.~Charikar, K.~Chen, and M.~Farach-Colton.
\newblock Finding frequent items in data streams.
\newblock {\em Theor. Comput. Sci.}, 312(1):3--15, 2004.

\bibitem{CohenL97}
E.~Cohen and D.~D. Lewis.
\newblock Approximating matrix multiplication for pattern recognition tasks.
\newblock In {\em SODA}, pages 682--691, 1997.

\bibitem{CormodeH09}
G.~Cormode and M.~Hadjieleftheriou.
\newblock Finding the frequent items in streams of data.
\newblock {\em Commun. ACM}, 52(10):97--105, 2009.

\bibitem{4536305}
J.~Demmel, M.~Hoemmen, M.~Mohiyuddin, and K.~Yelick.
\newblock Avoiding communication in sparse matrix computations.
\newblock In {\em IPDPS 2008.}, pages 1 --12, april 2008.

\bibitem{journals/siamcomp/DrineasKM06}
P.~Drineas, R.~Kannan, and M.~W. Mahoney.
\newblock Fast monte carlo algorithms for matrices i: Approximating matrix
  multiplication.
\newblock {\em SIAM J. Comput.}, 36(1):132--157, 2006.

\bibitem{conf/vldb/GhotingBPKNCD05}
A.~Ghoting, G.~Buehrer, S.~Parthasarathy, D.~Kim, A.~Nguyen, Y.-K. Chen, and
  P.~Dubey.
\newblock Cache-conscious frequent pattern mining on a modern processor.
\newblock In {\em VLDB}, 2005.

\bibitem{conf/vldb/LiL07}
E.~Li and L.~Liu.
\newblock Optimization of frequent itemset mining on multiple-core processor.
\newblock In {\em VLDB}, pages 1275--1285, 2007.

\bibitem{GPU-FPM}
H.~Li.
\newblock A gpu-based closed frequent itemsets mining algorithm over stream.
\newblock In {\em IEEE International Conference on Intelligent Computing and
  Intelligent Systems}, pages 6--10, 2010.

\bibitem{conf/fskd/LiZ10}
H.~Li and N.~Zhang.
\newblock Mining maximal frequent itemsets on graphics processors.
\newblock In {\em FSKD}, pages 1461--1464, 2010.

\bibitem{manmot}
G.~S. Manku and R.~Motwani.
\newblock Approximate frequency counts over data streams.
\newblock In Philip~A. Bernstein et~al., editors, {\em VLDB '02}, pages
  346--357, 2002.

\bibitem{spacesaving}
A.~Metwally, D.~Agrawal, and A.~El Abbadi.
\newblock An integrated efficient solution for computing frequent and top-{\it
  k} elements in data streams.
\newblock {\em ACM Trans. Database Syst.}, 31(3):1095--1133, 2006.

\bibitem{PaghITCS}
R.~Pagh.
\newblock Compressed matrix multiplication.
\newblock In {\em ITCS}, 2012.

\bibitem{conf/focs/Sarlos06}
T.~Sarl{\'o}s.
\newblock Improved approximation algorithms for large matrices via random
  projections.
\newblock In {\em FOCS}, pages 143--152, 2006.

\bibitem{thesis-clust}
S.~van Dongen.
\newblock {\em Graph Clustering by Flow Simulation}.
\newblock PhD thesis, University of Utrecht, May 2000.

\bibitem{yuetal}
J.~Xu Yu, Z.~Chong, H.~Lu, Z.~Zhang, and A.~Zhou.
\newblock A false negative approach to mining frequent itemsets from high speed
  transactional data streams.
\newblock {\em Inf. Sci.}, 176(14):1986--2015, 2006.

\bibitem{Zaki99}
M.~J. Zaki.
\newblock Parallel and distributed association mining: {A} survey.
\newblock {\em IEEE Concurrency}, 7(4):14--25, 1999.

\end{thebibliography}

\end{document}